\theoremstyle{plain}
\newtheorem{thm}{Theorem}
\newtheorem*{defn}{Definition} 
\newtheorem{proposition}[thm]{Proposition}
\newtheorem{corollary}[thm]{Corollary}
\newtheorem*{remark}{Remark}
\newcommand\blfootnote[1]{%
	\begingroup
	\renewcommand\thefootnote{}\footnote{#1}%
	\addtocounter{footnote}{-1}%
	\endgroup
}
\begin{document}

\title{Increasing the classical data throughput in quantum networks by combining quantum linear network coding with superdense coding}
\author[1,2]{Steven Herbert}
\affil[1]{\small{\textit{Department of Computer Science, University of Oxford, UK}}}
\affil[2]{\small{\textit{Riverlane, 1st Floor St Andrews House,
59 St Andrews Street, Cambridge,  UK}}}

\maketitle

\begin{abstract}
\noindent \blfootnote{Contact: sjh227@cam.ac.uk}This paper shows how network coding and superdense coding can be combined to increase the classical data throughput by a factor $2-\epsilon$ (for arbitrarily small $\epsilon > 0$) compared to the maximum that could be achieved using either network coding or superdense coding alone. Additionally, a general decomposition of a ``mixed'' network (i.e., consisting of classical and quantum links) is given, and it is reasoned that, owing to the inherent hardness of finding network codes, this may well lead to an increase in classical data throughput in practise, should a scenario arise in which quantum networks are used to transfer classical information.
\end{abstract}

\section{Introduction}
\label{intro}
The ability to perform \textit{superdense coding}, where two classical bits can be transmitted using a single qubit when a Bell pair is shared between the transmitter and receiver \cite{superdense}, is one of the fundamental results in the field quantum information. Discoveries such as \textit{quantum teleportation} \cite{teleport} and superdense coding demonstrate something of the fundamental nature of quantum information and have paved the way for modern quantum information theory, in which entanglement is frequently treated as a resource which can aid information transfer. In practise, this resource (shared entanglement) would need to be continually replenished, and so it is reasonable that the equipment and time required to do so should be should be taken into account in any quantification of the benefits of implementing superdense coding in any real world communications system. This reveals two possible operational advantages of superdense coding:
\begin{itemize}
\item \textbf{Superdense coding to increase peak data-rate.}\\ 
If a pair of nodes (say nodes ``A'' and ``B'') are connected by a quantum channel and experience variable classical data-rate demand, then, assuming that at least one of the nodes (say node ``A'') can generate entangled pairs, in the periods of low demand node A can transfer entanglement (namely halves of entangled pairs) to node B which can then be stored (assuming sufficiently high fidelity memory is available at each node) to enable superdense coding at times of high demand, doubling the peak data-rate.
\item \textbf{Superdense coding to ``reverse'' directed edges.}\\  
In the case where the two nodes are connected by a pair of \textit{directed} edges (with equal data-rates), one in each direction then, again assuming the ability of entanglement generation within the nodes, then a data-rate in the desired ``forward'' direction equal to twice the data-rate of the single forward link can be achieved by using the ``backward'' link to continually transfer halves of entangled pairs, as shown in Fig.~\ref{f1}.
\end{itemize}
\begin{figure}[!t]
	\centering
	\includegraphics[width=0.365\linewidth]{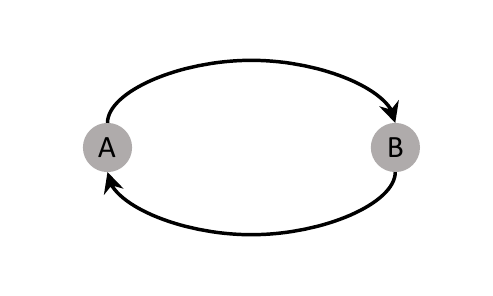}
	\captionsetup{width=0.95\linewidth}
	\caption{\small{Node ``A'' and node ``B'' are connected by two directed links of unit (quantum) data-rate (one in each direction). It is possible to achieve a classical information transfer from A to B at twice the unit data-rate if B continually generates Bell states, and transfers one half thereof to A over the B$\to$A link, as this shared entanglement can be used as a resource to enable superdense coding over the A$\to$B link.}}
	\label{f1}
\end{figure}
\begin{figure}[!t]
	\centering
	\includegraphics[width=0.73\linewidth]{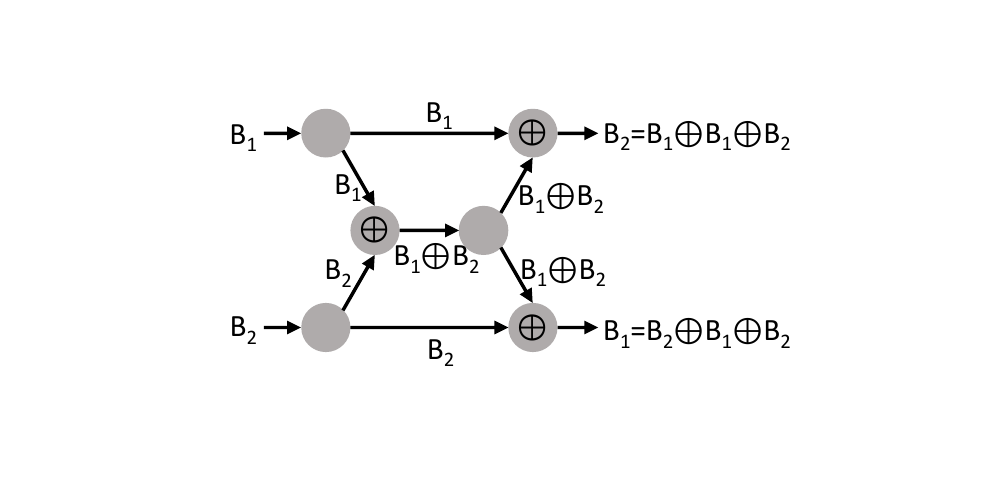}
	\captionsetup{width=0.95\linewidth}
	\caption{\small{Example of network coding over the Butterfly network for input bitstreams ``B$_1$'' and ``B$_2$'' -- nodes either perform a modulo-2 sum of the incoming bitstreams (when labelled $\oplus$) or fanout the single incoming bitstream otherwise -- thus connecting diagonally opposite corners of the directed network simultaneously, which would not be possible by simply routing.}}
	\label{f2}
\end{figure}
The requirement that edges be directed, as in the second item above, prompts consideration of \textit{network coding} and the question of whether it is possible to combine network coding and superdense coding in a way which yields some data throughput increase. Network coding was first introduced by Ahlswede \textit{et al} \cite{Ahlswede2000}, and is a technique in which each node in a telecommunications network combines its incoming bitstreams before forwarding them in such a way that the overall data throughput exceeds that which can be achieved by simply forwarding the bitstreams. The standard illustrative example of network coding is the Butterfly network, shown in Fig.~\ref{f2}. Whilst network coding can in principle be applied to undirected as well as directed networks, the famous \textit{multiple unicast conjecture} holds that in undirected networks the data throughput achieved by network coding can always be achieved by the alternative technique of fractional routing (the details of which are not of concern here) \cite{Li2004b}. For this reason, along with the aforementioned advantage of superdense coding over directed networks, this paper is concerned with directed networks (i.e., networks in which all of the links are unidirectional, but with parallel oppositely directed edges permitted) and also considers only the ``$k$-pairs'' setting, in which $k$ bitstreams must be transmitted from transmitter nodes to receiver nodes (i.e., the Butterfly network shows a 2-pairs network coding problem).\\
\indent The possibility of combining network coding and superdense coding to increase data throughput is motivated by the observation that network coding protocols \textit{can} be applied to quantum information transfer \cite{Leung2006} and entanglement distribution \cite{Kobayashi2009,Kobayashi2011,Satoh2012,deBeaudrap2014,beaudrap2019quantum}. The main result presented in this paper is that combining quantum linear network coding (QLNC) and superdense coding can yield a factor $(2- \epsilon)$ increase in data throughput compared to that which could be achieved using QLNC or superdense coding alone (for arbitrarily small $\epsilon > 0$).
\section{Network architecture}
\label{architecture}
A communications network consists of $n$ nodes, connected by noiseless unidirectional links of specified data-rate. There are two types of links, classical links which transfer classical information, and quantum links which transfer quantum information.
\begin{defn}
The data-rate of a classical link is the maximum \textit{bit transfer rate}.
\end{defn}
\begin{defn}
The data-rate of a quantum link is the maximum \textit{qubit transfer rate}.
\end{defn}
\noindent In particular a qubit could be one-half of a Bell pair (the other half of which remains at the transmitter node). Classical protocols over networks including quantum links could be implemented as a special case by transferring only (unentangled) orthogonal basis states, however classical information cannot be communicated over the quantum links faster than the data-rate and, unlike the typical assumption in quantum computing networks, there is no mechanism for global classical control -- any classical control information must be communicated through the network.\\
\indent The network consists of $k$ transmitter nodes, each of which must transfer a (classical) bitstream to its corresponding receiver (i.e., there are $k$ receivers in total). All transmitters and receivers are distinct nodes, and additionally there are ``relay'' nodes that are neither transmitters or receivers. Whilst the techniques developed by de Beaudrap and Herbert \cite{beaudrap2019quantum} are used to prove the main result below, in this paper a slightly different setting is considered: the nodes are not single qubits, but rather are small devices hosting a number of qubits and equipped with the necessary (quantum) computational power and resources to enable the desired operations in communications network. The goal is to maximise the data throughput, which for simplicity is defined:
\begin{defn}
The data throughput is the average bit-rate achieved over the $k$ transmitter-receiver pairs.
\end{defn}
%
%
%
\section{Main results}
\label{main}
The essential idea is to construct a composite network in which a ``backward'' quantum network continually replenishes entanglement consumed by a ``forward'' quantum network which performs superdense coding: in this way, the network in Fig.~\ref{f1} is an extremely simple example of the kind of network being considered. In particular, in order to demonstrate a significant advantage gained by combining network coding and superdense coding, a network based on \cite[Fig.~3]{beaudrap2019quantum} is used for the backward network.
\begin{proposition}
\label{prop1}
For any $\epsilon > 0$ there exists a network in which the data throughput can be increased by a factor $(2- \epsilon)$ by using both QLNC and superdense coding, compared to that which can be achieved using either QLNC or superdense coding alone (for some $\epsilon > 0$).
\end{proposition}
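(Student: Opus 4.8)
The plan is to construct an explicit network that realises the claimed $(2-\epsilon)$ advantage, rather than to argue abstractly. Following the ``composite network'' idea sketched just before the proposition, I would build a two-layer directed network: a \emph{forward} layer on which the actual classical bitstreams are sent using superdense coding, and a \emph{backward} layer whose sole job is to distribute the Bell-pair halves that the forward layer consumes. The backward layer should be taken to be (a copy of) the QLNC network of \cite[Fig.~3]{beaudrap2019quantum}, because that network is known to achieve a QLNC throughput strictly exceeding what pure routing attains; I would reuse its established coding solution verbatim as a black box rather than re-deriving it.

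First I would fix notation for the chosen network: label the $k$ transmitter--receiver pairs, specify which links are classical and which are quantum, and state the unit data-rate on each link. Second, I would compute the benchmark throughput $T_{\mathrm{alone}}$ achievable using \emph{either} QLNC \emph{or} superdense coding in isolation on this network, being careful to take the maximum of the two, since the proposition compares against whichever single technique is better. Third, I would describe the combined protocol: the backward QLNC layer continually delivers entanglement to the appropriate nodes, and each delivered Bell pair is consumed by a superdense-coding transmission on the matching forward quantum link, exactly as in the elementary example of Fig.~\ref{f1}. I would then compute the combined throughput $T_{\mathrm{comb}}$ and verify that $T_{\mathrm{comb}}/T_{\mathrm{alone}} \to 2$ as the network is scaled, so that for any prescribed $\epsilon>0$ a large enough instance gives ratio at least $2-\epsilon$.

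The scaling argument is where the $\epsilon$ enters, so I would make it explicit. The factor is $2-\epsilon$ rather than a clean $2$ because some finite overhead (for instance, links or time-steps devoted purely to seeding or synchronising the entanglement distribution, or a boundary effect in the QLNC network) does not benefit from the doubling; by enlarging the network one drives the relative weight of this overhead to zero. Concretely I would exhibit a family of networks indexed by a size parameter $N$, show $T_{\mathrm{comb}}(N)/T_{\mathrm{alone}}(N) = 2 - O(1/N)$, and choose $N$ large enough that $O(1/N) < \epsilon$.

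The main obstacle I anticipate is the careful accounting in the third and fourth steps: ensuring that every forward superdense-coding step has a Bell pair available exactly when needed, i.e.\ that the entanglement-generation rate of the backward QLNC layer matches the consumption rate of the forward layer at every node and time-step, with no idle stalls. This requires checking that the QLNC solution borrowed from \cite{beaudrap2019quantum} distributes entanglement to precisely the nodes the forward layer needs, and in the right quantity per time-step. Subtleties specific to this setting must also be respected, namely that there is no global classical control and all classical side-information (such as the outcomes of any measurements used in the entanglement-distribution protocol) must itself be routed through the network within the stated data-rates; I would need to confirm this bookkeeping does not eat into the forward throughput in a way that spoils the $2-\epsilon$ bound. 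Once the rate-matching and control-information accounting are shown to be consistent, the throughput comparison and the limiting argument are routine.
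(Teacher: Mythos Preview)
Your plan mirrors the paper's proof closely: the paper also builds an explicit network with direct forward quantum links $t_i \to r_i$ and a backward QLNC layer (essentially the network of \cite[Fig.~3]{beaudrap2019quantum}) routed through a single bottleneck edge $m_2 \to m_1$, together with auxiliary classical links carrying the measurement outcomes needed for the QLNC terminations. The paper spells out the QLNC protocol step by step in the qubit-formula formalism rather than invoking it as a black box, but that is a presentational difference, not a structural one; your ``use it verbatim'' shortcut is fine.

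Where your plan is thinner than the paper is in bounding $T_{\mathrm{alone}}$. Computing the \emph{achieved} rate of the combined protocol is the easy direction; the substance is in \emph{upper}-bounding what each single technique can do, and ``compute the benchmark throughput'' glosses over this. The paper's arguments are: (i) without superdense coding there is no mechanism to reverse a directed link, so the cut separating $\{t_1,\dots,t_k\}$ from the rest of the network has exactly $k$ outgoing unit-rate edges, capping the per-pair rate at $1$; (ii) without network coding, every receiver-to-transmitter quantum path must traverse the single $m_2 \to m_1$ edge, so entanglement is replenished at total rate $1$, i.e.\ $1/k$ per pair, yielding per-pair rate $(k+1)/k$. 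These two structural features are precisely why the construction succeeds, and the network must be designed so that both hold simultaneously; simply layering an arbitrary forward network on an arbitrary QLNC backward network does not guarantee either bound. Your proposal should make these cut and bottleneck arguments explicit.

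A smaller point: the paper uses \emph{two} scaling parameters, the number of pairs $k$ and the bitstream length $n_b$. Large $k$ drives the superdense-alone bound $(k+1)/k \to 1$; large $n_b$ amortises the fixed three-time-step latency of initialising the QLNC, so that the combined scheme's time $n_b/2 + 3$ approaches $n_b/2$. Your single size parameter $N$ conflates these two distinct roles and should be unpacked.
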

\begin{proof}
The proposition is proven by the following construction. Consider a network consisting of $k$ transmitter-receiver pairs, and two intermediate relay nodes. Let the transmitter nodes be denoted $t_1 \cdots t_k$, the receiver nodes $r_1 \cdots r_k$ and the two ``midway'' relay nodes $m_1$ and $m_2$. All links in the network are of unit data-rate, and the network is expressed in terms of two directed graphs over the same set of nodes (vertices), $G_c = \{V,E_c\}$ and $G_q = \{V,E_q\}$, with edges corresponding to the classical and  quantum links in the network respectively. The network connections can be decomposed into seven components:
\begin{enumerate}[(a)]
    \item Each transmitter is connected to its receiver by a quantum link. That is, $\forall i , \{t_i, r_i\} \in E_q$.
    \item Each receiver is connected to each transmitter except its paired transmitter by a quantum link. That is, $\forall i \forall j \neq i, \{r_i, t_j \} \in E_q$.
    \item Each receiver is connected to the second midway node, by a quantum link. That is, $\forall i, \{r_i, m_2\} \in E_q$.
    \item The second midway node is connected to the first midway node by a quantum link. That is, $\{m_2, m_1\} \in E_q$.
    \item The first midway node is connected to each transmitter node by quantum links. That is, $\forall, i \{m_1, t_i\} \in E_q$.
    \item The second midway node is connected to each receiver by a classical link. That is, $\forall i, \{m_2, r_i\} \in E_c$.
    \item Every transmitter is connected to every other transmitter by a classical link. That is, $\forall i \forall j \neq i \{t_i, t_j\} \in E_c$.
\end{enumerate}
Where an ordered pair denotes an edge \textit{from} the vertex corresponding to the first element of the pair \textit{to} the node corresponding to the vertex corresponding to the second element of the pair, and  $1 \leq i,j \leq k$.
\begin{figure}[!t]
	\centering
	\includegraphics[width=0.73\linewidth]{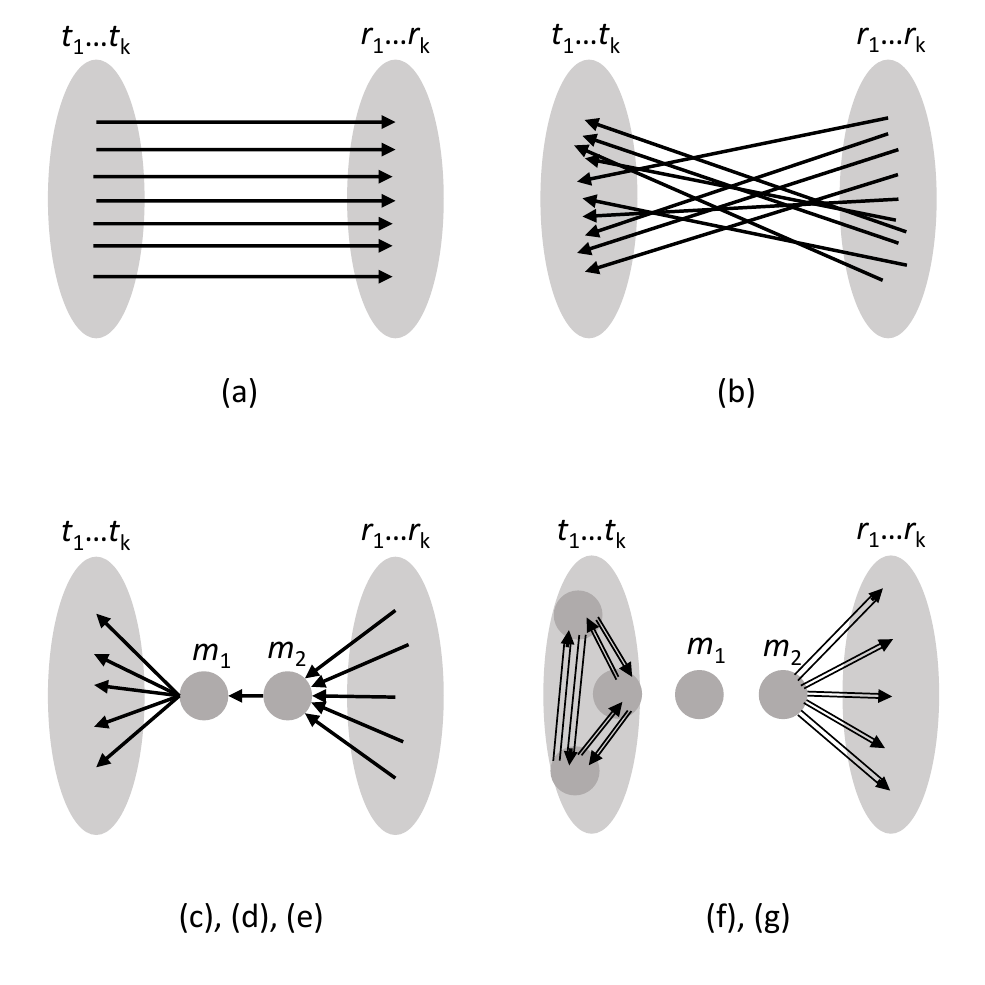}
	\captionsetup{width=0.95\linewidth}
	\caption{\small{The network used to prove Proposition~\ref{prop1}: (a) shows each transmitter connected to its corresponding receiver; (b) shows each receiver connected to every transmitter other than its ``own''; components (c), (d) and (e) show connections from the receivers to the transmitters via a ``bottleneck'' between the intermediate nodes; and components (f) and (g) show classical channels, connecting all pairs of transmitters (in each direction) and from the intermediate node $m_2$ to each receiver.}}
	\label{f3}
\end{figure}
Fig.~\ref{f3} shows a general illustration of such a network. The first part of the proof is showing that components (b) -- (g) admit a quantum linear network code that can distribute a Bell pair between each transmitter-receiver pair, for which the QLNC formalism \cite{beaudrap2019quantum} is used. To recap the QLNC formalism, to the extent which it is used here:
\begin{itemize}
\item Each qubit is labelled by a qubit formula.
\item A qubit can be initialised in either the $\ket{+}$ state, in which case it is given a unique symbolic label, or the $\ket{0}$ state, in which case it is initially labelled 0.
\item The action of a CNOT gate in which a qubit with qubit formula $q_1$ controls a target qubit with qubit formula $q_2$ is to set the qubit formula of the second qubit to $q_1 + q_2$, that is $q_2 \leftarrow q_1 + q_2$ (where all additions are modulo-2, so $q_1 + q_2 \equiv 0$).
\item After the CNOT gates have been executed, any qubit formula which remains as a sum (i.e., $q_1+q_2+q_3$), rather than a single symbol (i.e., $q_1$) is ``terminated'': this is achieved by performing a $X$-basis measurement thereon, and using the result of this measurement to classically control Pauli-$Z$ gates on other qubits such that the qubit formulas of these other qubits sum to the formula of the qubit being terminated. For example, if a qubit with formula $q_1+q_2$ is to be terminated, then the result of the $X$-basis measurement could control a Pauli-$Z$ gate on two qubits, one with formula $q_1+q_3$ and one with formula $q_2+q_3$ (i.e., because $q_1+q_3+q_2+q_3 = q_1+q_2$. Qubits whose formula is a single symbol can be terminated in the same manner, for example if a qubit with formula $q_1$ is to be terminated, then after performing an $X$-basis measurement thereon, the result could be used to classically control a Pauli-$Z$ gate on another qubit labelled $q_1$. Terminated qubits can now be omitted from any further consideration.
\item Once all of the terminations have been performed, what will remain is a number of qubits whose qubit formula is a single symbol. Qubits with the same symbol will be in a $\ket{\Phi^+}$ or $\ket{GHZ}$ state. For example, if qubits 1 and 2 have label $a_1$ and qubits 3 and 4 have label $a_2$, then the state will be $\ket{\Phi^+}\otimes\ket{\Phi^+}$ (with the qubits ordered 1234).
\end{itemize}
With the QLNC formalism thus summarised, consider the following procedure:
\begin{enumerate}[\itshape i.]
    \item Each receiver creates a $(k+1)$-qubit GHZ state (that is, by initialising a single qubit in the $\ket{+}$ state, and using this to control CNOT gates with $k$ other qubits initialised in the $\ket{0}$ as the targets). For the $i$th receiver, let each qubit in the GHZ state be labelled $a_i$. Each receiver sends one of its qubits on each of its outgoing links (i.e., components (b) and (c)). So each receiver $r_i$ has a single qubit labelled $a_i$, node $m_2$ has qubits $a_1 \cdots a_k$, and each transmitter has $k-1$ qubits, such that the $j$th transmitter has qubits labelled $a_1 \cdots a_k$, except $a_j$.
    \item Node $m_2$ initialises a qubit in the state 0, which is then the target of $k$ successive CNOT gates using each of the qubits labelled $a_1 \cdots a_k$ as the control. Thus this qubit is labelled $\sum_{i=1}^k a_i$ after the CNOT gates have been executed (and the $k$ other qubits labelled $a_1 \cdots a_k$ also remain at $m_2$).
    \item At node $m_2$ each of the qubits labelled $a_1 \cdots a_k$ is measured in the $X$-basis. For all $i$, the result of the $X$-basis measurement of the qubit labelled $a_i$ is sent to the node $t_i$ (i.e., over the links in component (f)) whereupon its binary value controls a $Z$ gate on the single qubit located there (i.e., at $t_i$ a qubit labelled $a_i$ resides). Therefore the qubits labelled $a_1 \cdots a_k$ at $m_2$ have been terminated. Node $m_2$ sends its remaining qubit, labelled $\sum_{i=1}^k a_i$ to node $m_1$ (i.e., over the link in component (d)).
    \item Node $m_1$ initialises $(k-1)$ qubits in the state 0, and performs a CNOT controlled by the qubit labelled $\sum_{i=1}^k a_i$ on each as a target. Thus it has a total of $k$ qubits each labelled $\sum_{i=1}^k a_i$. Node $m_1$ sends one of these qubits over each of its outgoing links (i.e., the links in component (e)).
    \item Now each transmitter has $k$ qubits. Specifically, (for all $j$) $t_j$ has a qubit labelled $\sum_{i=1}^k a_i$ (from step \textit{iv}), and $k-1$ qubits with labels $a_1 \cdots a_k$ except for $a_j$ (from step \textit{i}). At all transmitters, the qubit whose formula is a single symbol is used as the control in a CNOT gate performed with the qubit labelled $\sum_{i=1}^k a_i$ as target. Therefore, in the QLNC formalism, at node $t_j$ the qubit formula $\sum_{i=1}^k a_i$ will have added to it $a_i$ for all $i$ except $i=j$, and so the result will be that it will be labelled $a_j$ (i.e., $\sum_{i} a_i + \sum_{i\neq j} a_i = a_j$. The $k-1$ qubits labelled $a_i$ for all $i \neq j$ will be unchanged, and so after these operations each node $t_i$ has $k$ qubits each with a single-symbol label $a_1 \cdots a_k$.
    \item For all $i$, at node $t_i$, a $X$-basis measurement is performed on each qubit except that labelled $a_i$. For all $j$ (except $j=i$) send the of the $X$-basis measurement is sent to node $t_j$ (i.e., via the links in component (g)), whereupon it is used to control a $Z$-gate on the qubit labelled $a_j$, thus terminating all qubits at node $i$, except that with formula $a_i$.
\end{enumerate}
Thus the final state is that, for all $i$, $t_i$ and $r_i$ both host a single qubit labelled $a_i$ and there are no other unterminated qubits. Thus each pair $t_i,r_i$ shares a Bell pair.\\
\indent In the above procedure, steps \textit{ii} and \textit{v} concern only intra-node operations, and are therefore modelled as instantaneous. Thus the procedure takes three units of time to complete. After the initial network code is performed, as each link can be operated independently, and in parallel with the other links, it follows that the QLNC described can be used to distribute a Bell pair between each transmitter-receiver pair at a unit rate. Using the component of the network (a) for the transmission of qubits, each of which encodes two bits using standard superdense coding, from each transmitter to its receiver at a unit rate, and the quantum network code detailed in steps \textit{i} to \textit{iv} to continually replenish entanglement consumed (also at a unit rate, as above), bitstreams of length $n_b$ can be sent from each transmitter $t_i$ to its corresponding receiver $r_i$ in $(n_b/2)+3$ units of time (where the $+3$ accounts for the three time steps initially required to set-up the first network code).\\
\indent The final part of the proof is to show the maximum data throughput that can be achieved by either QLNC or superdense coding alone, relative to that achieved by the combination of network coding and superdense coding detailed above. In the former case, the absense of superdense coding means that there is no mechanism for ``reversing the direction'' of links -- therefore cutting the network such that all of the transmitter nodes are in a first partition, and all other nodes are in a second partition, it can be seen that there are $k$ outgoing links from the first partition to the second, each of unit rate, therefore the average data-rate  for the $k$ transmitter-receiver pairs cannot exceed one unit: i.e., it will take $n_b$ units of time to transmit the bitstream.\\
\indent In the case where superdense coding is allowed, but network coding is not, the only quantum path from any receiver to its transmitter is via the quantum link $\{m_2, m_1\}$, and therefore the rate of entanglement replenishment per transmitter-receiver pair is limited to $\frac{1}{k}$, if shared equally. Therefore, it will only be possible to use superdense coding on $\frac{1}{k}$ of the transmitted qubits, and for the remaining $\frac{k-1}{k}$ only a rate of one bit per qubit will be achievable. So the average rate will be $\frac{k+1}{k}$. Again, setting up the initial shared entanglement will itself take $3$ units of time, and so transmitting the $n_b$ bit bitstreams between each transmitter-receiver pair will take $((k+1)/k)n_b + 3$ units of time. Thus, by choosing sufficiently large $k$ and $n_b$, the claim that data throughput can be increased by a factor $(2- \epsilon)$ by using both QLNC and superdense coding, compared to that which can be achieved using either QLNC or superdense coding alone (for some $\epsilon > 0$) is proven.
\end{proof}
\noindent The network used in the proof of Proposition~\ref{prop1} has been constructed to prove the result, rather than because it resembles anything that may occur in reality: it is a contrived structure, which has a specific bottleneck that can be contravened only by QLNC. It is therefore appropriate to ask whether in practise there may be any benefits to combining superdense coding and QLNC as described in Proposition~\ref{prop1}. One way to answer such a question is not to produce an explicit example of an actual real communications network of a form which will definitively benefit from the result in Proposition~\ref{prop1} but rather to describe how the result enables a more general class of solutions to network connectivity problem, which may prove beneficial in practise. To do so, it is necessary to allow any link in the network to be split into multiple parallel links connecting the same pair of nodes as the original link (in the same direction), with data-rates summing to that of the original link. 
\begin{defn}
A directed network with $n$ nodes can be represented by two weighted multi-graphs with $n$ vertices: the first multi-graph such that the weights of parallel (directed) edges sum to the data-rate of the corresponding quantum link in the network; and the second multi-graph such that the weights of parallel (directed) edges sum to the data-rate of the corresponding classical link in the network. See Fig.~\ref{f3} for a very simple example.
\end{defn}
\begin{figure}[!t]
	\centering
	\includegraphics[width=0.73\linewidth]{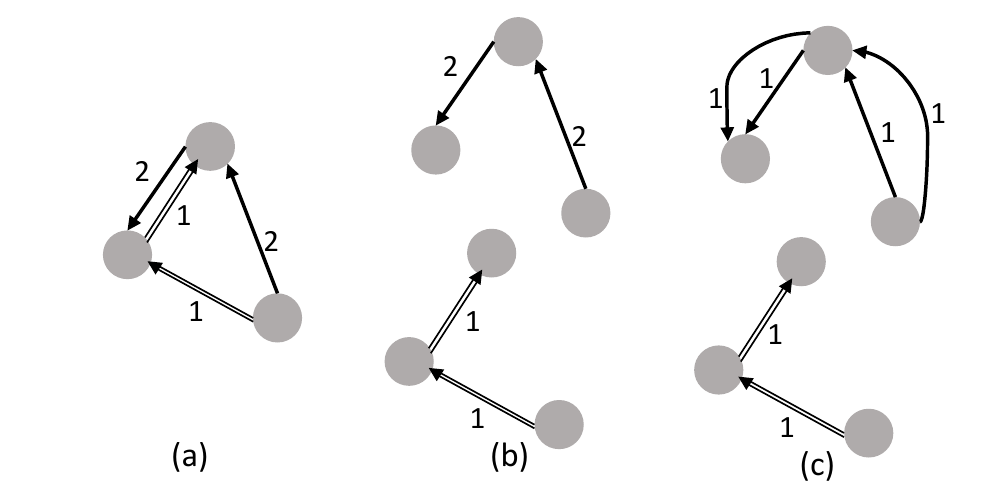}
	\captionsetup{width=0.95\linewidth}
	\caption{\small{Example of drawing equivalent graphs with the link rates shown: (a) shows the network (arrows with double parallel lines are used to denote classical edges); (b) shows the ``natural'' equivalent classical and quantum graphs, where each edge weight is simply the link rate; and (c) shows another equivalent quantum graph with parallel edges included such that each edge has unit weight (in (c) the classical graph is the same as in (b), as all edges have unit weight).}}
	\label{f4}
\end{figure}
\noindent From a communications point of view, this is uncontroversial, as it simply corresponds to sharing the capacity of some link (and for simplicity, the \textit{weight} of an edge will be referred to as its \textit{rate}), and it enables the following corollary to Proposition~\ref{prop1}:
\begin{corollary}
\label{cor2}
Reusing the above notation, let a directed network including $k$ transmitter-receiver pairs (as well as other relay nodes), be represented by a quantum multgraph $G_q$ and a classical multigraph $G_c$ as detailed above, and thereafter be further decomposed into four edge-disjoint components:
\begin{enumerate}
    \item A classical linear network code connecting the $k$-transmitters to their corresponding $k$-receivers, over a subgraph of $G_q \cup G_c$ such that each edge therein has rate $\tilde{w}$.
    \item A subgraph of $G_q$ consisting of edge-disjoint paths from each transmitter to each receiver, in which each edge has rate $w$.
    \item A subgraph of $G_q \cup G_c$, such that each quantum edge consists of a shared Bell pair (unless used to transfer classical information in the QLNC), and each node hosts of an arbitrary number additional qubits, with ``free'' intra-node operations, which admits a classical-quantum linear network code (in the sense of \cite[3.3.3]{beaudrap2019quantum}) connecting $k$ transmitter-receiver pairs, but allowing the binary network code to be such that for any pair the direction of information flow could be either from transmitter to receiver or from receiver to transmitter.
    \item A subgraph of $G_q \cup G_c$ sufficient to perform the terminations of the QLNC in item 3.
\end{enumerate}
Then an asymptotic (with bitstring length) data throughput $\tilde{w} + 2w$ can be achieved.
\end{corollary}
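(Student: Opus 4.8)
The plan is to show that the four edge-disjoint components can all be operated simultaneously and that their individual contributions to the throughput simply add. Because the components are edge-disjoint by hypothesis, running them in parallel never exceeds any link's capacity, so it suffices to bound the throughput contributed by each component separately and sum the results. First I would dispatch component~1: by assumption it carries a classical linear network code connecting the $k$ transmitter--receiver pairs over edges of rate $\tilde{w}$, so it directly delivers classical bits at rate $\tilde{w}$ per pair, with no quantum resource involved.

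The substance of the argument lies in components~2, 3 and 4, which together reproduce the ``superdense-coding-over-continually-replenished-entanglement'' scheme of Proposition~\ref{prop1}, now abstracted to the mixed-network setting. Component~3, together with the classical communication furnished by component~4 for the terminations, implements a classical--quantum linear network code in the sense of \cite[3.3.3]{beaudrap2019quantum}: starting from the edge Bell pairs and using only free intra-node operations, local $X$-basis measurements, and classically-controlled Pauli corrections, it redistributes entanglement so that a fresh Bell pair is established between each $t_i$ and $r_i$ at rate $w$, matching the qubit-transmission rate of component~2. I would verify, following the QLNC formalism recalled above, that after all terminations the only surviving qubits are single-symbol-labelled pairs $a_i$, one half at $t_i$ and one half at $r_i$, thereby confirming genuine Bell pairs rather than mere classical correlation.

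The step I expect to be the main obstacle is justifying that this code works with the direction of the underlying binary network code chosen freely --- transmitter-to-receiver or receiver-to-transmitter --- for each pair. This is precisely what generalizes the ``backward network'' of Proposition~\ref{prop1}: because a Bell pair is symmetric under exchange of its two halves, establishing entanglement between $t_i$ and $r_i$ does not commit to a direction of classical information flow, so the QLNC may route its replenishing entanglement in whichever direction the underlying graph admits a scalar-linear network code, independently of the direction in which data will ultimately be sent. I would argue this carefully via the de Beaudrap--Herbert correspondence between scalar-linear classical network codes and QLNC entanglement distribution, emphasising that the consumer of the entanglement (superdense coding in the forward direction) is decoupled from the producer (the QLNC, running in the admissible direction).

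Finally I would combine components~2 and 3 through superdense coding: component~2 supplies edge-disjoint quantum paths carrying qubits from each $t_i$ to each $r_i$ at rate $w$, and each such qubit, consuming one of the Bell pairs continually replenished by component~3, encodes two classical bits, so components~2--4 jointly deliver $2w$ bits per pair. Adding the $\tilde{w}$ from component~1 yields $\tilde{w} + 2w$. The remaining detail is the asymptotic qualifier: the fixed number of time steps needed to prime the first round of the network code (as in Proposition~\ref{prop1}) is a constant overhead, so dividing total bits transmitted by time taken gives a throughput approaching $\tilde{w} + 2w$ as the bitstring length grows without bound.
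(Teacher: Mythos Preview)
Your proposal is correct and follows essentially the same approach as the paper's proof, which is a terse two-sentence sketch: component~1 yields rate~$\tilde{w}$ directly, and components~2--4 together implement superdense coding with entanglement continually replenished by the QLNC, yielding rate~$2w$. Your version simply fleshes out the details (edge-disjointness justifying parallel operation, Bell-pair symmetry justifying direction-agnosticism, and the constant setup overhead justifying the asymptotic qualifier) that the paper leaves implicit.
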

\begin{proof}
The first component straightforwardly allows a rate of $\tilde{w}$ to be achieved, the second third and fourth components enable superdense coding and network coding to be combined achieving a rate of $2w$: the third and fourth components together compose the QLNC which replenishes the entanglement needed to use the second component for superdense coding.
\end{proof}
\begin{remark}
The network decomposition and corresponding data transfer protocol described in Corollary~\ref{cor2} clearly generalises simple routing, network coding (without superdense coding), and superdense coding (without network coding): in the case of simple routing and network coding (without superdense coding), only the first component is used (and for simple routing this will consist of edge-disjoint paths for each transmitter-receiver pair); and in the case of superdense coding (without network coding), the first, second and third components simply consist of edge disjoint paths.
\end{remark}

\section{Discussion}
\label{disc}
The main result of this paper is somewhat removed from the setting with which one may typically be concerned when designing quantum communications networks. In particular, the assertion that data-rate is limited by the information transfer rates of the network's component links and that therefore classical information cannot be transferred faster than quantum information is likely to be unrealistic in the near to medium term -- in reality, the rate of preparation quantum states, especially entangled states is likely to be the ``bottleneck'' in the process. For example, optical fibres can transmit data at rates exceeding 100 terabits per second \cite{fibre}, which is many orders of magnitude above state-of-the-art entanglement generation (e.g., \cite{entanglerate, rate2}). Nevertheless, superdense coding continues to be the subject of theoretical and experimental research \cite{sd1,sd2,sd3}, and given that entanglement generation is a local process, whereas a network of optic fibres is an expensive piece of (spatially-spread) infrastructure, it is envisagable that superdense coding \textit{may} one day be used to increase classical data-throughput.\\
\indent Proceeding with the discussion in this spirit, it can be seen that Corollary~\ref{cor2} is relevant when considering utilising ``mixed'' networks (i.e., consisting of classical and quantum links) for the transfer of classical information, not only because of the potential for a factor of two increase in data throughput, but also because by considering the network as decomposed into the four components of Corollary~\ref{cor2}, more solutions are admitted than by simply considering simple routing, or even network coding. This is important, because finding network codes is in general a NP-hard problem \cite{Lehman2004}, and so it follows that using heuristics to find good solutions is the best that can be hoped for (as opposed to exactly solving). Therefore by explicitly giving the decomposition in Corollary~\ref{cor2}, it \textit{may} be the case that some heuristic can more readily find efficient solutions -- especially given the theoretical potential for higher rates.\\
\indent In summary, this article illustrates how QLNC and superdense coding can theoretically be combined to increase classical data throughput in quantum networks. Moreover, should a scenario ever arise in which superdense coding over quantum networks \textit{is} used to transfer classical information, then for the reasons remarked on below Corollary~\ref{cor2}, the work presented herein is likely to be relevant in practise.

\section*{Acknowledgements}
This work was supported by an Networked Quantum Information Technologies Hub Industrial Partnership Project Grant. The author also thanks Niel de Beaudrap with whom he developed the QLNC formalism that led to this work.

\bibliography{mybib}{}
\bibliographystyle{IEEEtran}

\end{document}